\newtheorem{theorem}{Theorem}
\theoremstyle{plain}
\newtheorem{lemma}{Lemma}
\newtheorem{remark}{Remark}
\numberwithin{equation}{section}
\begin{document}

\bigskip
\bigskip
\bigskip

\begin{center}
ON THE GENERALIZED HILL PROCESS FOR SMALL PARAMETERS AND APPLICATIONS
\end{center}

\bigskip

\begin{center}
Gane Samb LO $^{*}$,  El Hadji DEME $^{**}$ and Aliou DIOP $^{**}$\\

\bigskip

$^{*}$ LSTA, UPMC, France and LERSTAD, Universit\'e Gaston Berger
de Saint-Louis, SENEGAL\\
gane-samb.lo@ugb.edu.sn, ganesamblo@ufrsat.org, www.lsta.upmc.fr\\

\bigskip

${**}$ LERSTAD, Universit\'e Gaston Berger de Saint-Louis, SENEGAL\\
ehdeme@ufrsat.org, aliou.diop@ugb.edu.sn\\
\end{center}

\bigskip

\noindent \textit{ABSTRACT}.\\
Let $X_{1},X_{2},...$ be a sequence of independent copies (s.i.c) of a real
random variable (r.v.) $X\geq 1$, with distribution function $df$ $F(x)=\mathbb{P}%
(X\leq x)$ and let $X_{1,n}\leq X_{2,n} \leq ... \leq X_{n,n}$ be the order
statistics based on the $n\geq 1$ first of these observations. The following
continuous generalized Hill process 
\begin{equation*}
T_{n}(\tau )=k^{-\tau }\sum_{j=1}^{j=k}j^{\tau }\left( \log X_{n-j+1,n}-\log
X_{n-j,n}\right) ,  \label{dl02}
\end{equation*}
$\tau >0$, $1\leq k \leq n$, has been introduced as a continuous family of
estimators of the extreme value index, and largely studied for statistical
purposes with asymptotic normality results restricted to $\tau > 1/2$. We
extend those results to $0 < \tau \leq 1/2$ and show that asymptotic
normality is still valid for $\tau=1/2$. For $0 < \tau <1/2$, we get non
Gaussian asymptotic laws which are closely related to the Riemann function $%
\zeta(s)=\sum_{n=1}^{\infty} n^{-s},s>1$. \\

\bigskip

\noindent \textit{2010 Mathematics Subject Classifications}. \textbf{Primary} 62E20, 62F12, 60F05. \\ \textbf{Secondary} 60B10, 60F17.\\

\bigskip

\noindent \textit{Key words and phrases}. Extreme values theory; Asymptotic distribution; Functional
Gaussian and nongaussian laws; Uniform entropy numbers; Asymptotic
tightness, Stochastic process of estimators of extremal index; Sowly and
regularly varying functions.\\

\Large

\section{Introduction}

\label{intro}

The aim of this note is to settle an asymptotic theory for some functional
forms of Hill's estimators. Precisely, let $X_{1},X_{2},...$ be a sequence
of independent copies (s.i.c) of a real random variable (r.v.) $X$ with
distribution function ($df$) $F(x)=\mathbb{P}(X\leq x)$, $x\in\mathbb{R}$. Since we are only
concerned with the upper tail of $F$ in this paper, we assume without loss
of generality that $X\geq 1$ and define a s.i.c. of the r.v. $Y=\log X$
denoted $Y_{1},Y_{2},...$ with $df$ $G(x)=P(Y\leq x)=F(e^{x}$), $x\geq 0.$
Finally $Y_{1,n}=\log X_{1,n}\leq ...\leq Y_{n,n}=\log X_{n,n}$ are their
respective order statistics. The following generalized continuous Hill
process 
\begin{equation}
T_{n}(\tau)=k^{-\tau }\sum_{j=1}^{j=k}j^{\tau }\left( \log X_{n-j+1,n}-\log
X_{n-j,n}\right) ,  \label{dl02a}
\end{equation}
has been introduced and studied in \cite{dioplo} and \cite{dioplo90},
for a continuous parameter $\tau >0$, and as throughout this paper, 
\begin{equation*}
1\leq k=k(n)\leq n,\text{ }k\rightarrow +\infty ,\text{ }k/n\rightarrow 0%
\text{ }as\text{ }n\rightarrow +\infty .
\end{equation*}
But some margins of (\ref{dl02a}) appeared before that, since $T_{n}(1)$ is
the Hill estimator introduced in \cite{hill} in 1975. Also, De Haan and
Resnick \cite{haanresnick} proposed $DR_{n}=T_{n}(0)/(\log k)$ as a
simple estimator of the extreme value index. It has been shown in \cite{lothese}
that for $F \in D(\Lambda)$, $DR_{n}$ - when appropriately centered and
normalized - converges in law to a multiple of a Gumbel law.\newline

\bigskip

\noindent However, we are only concerned in this paper with values $\tau >0$. In this case, the process (\ref{dl02a}) provides a family of estimators of
the Univariate Extreme Value Index as shown in \cite{dioplo90} in a sense to be
defined shortly, after a brief recall on Extreme Value Theory.

\bigskip \noindent Since this work arises in the just mentioned
theory, it would be appropriate to make some reminders before going any
further. The reader is referred to de Haan \cite{dehaan}, de Haan and
Ferreira \cite{dehaan1}, Resnick \cite{resnick} and Galambos \cite
{galambos} for a modern and large account of the Extreme Value Theory.
However, the least to say is that the $df$ $F$ is said to
be attracted to a non degenerated $df$ $M$ iff the maximum $X_{n,n}=\max
\left( X_{1},...X_{n}\right)$, when appropriately centered and
normalized \ by two sequences of real numbers $\left( a_{n}>0\right) _{n\geq
0}$ and $\left( b_{n}\right) _{n\geq 0}$, converges to $M$, in the sense that

\begin{equation}
\lim_{n\rightarrow +\infty }\mathbb{P}\left( X_{n,n}\leq a_{n}\text{ }x+b_{n}\right)
=\lim_{n\rightarrow +\infty }F^{n}\left( a_{n}x+b_{n}\right) =M(x),
\label{dl05}
\end{equation}
for continuity points $x$ of $M$. If (\ref{dl05}) holds, it is said that $F$
is attracted to $M$ or $F$ belongs to the domain of attraction of $M$,
written $F\in D(M).$ It is well-kwown that the three nondegenerate possible
limits in (\ref{dl05}), called extremal $df$'s, are the following.
\newline

\noindent The Gumbel $df$ 
\begin{equation}
\Lambda (x)=\exp (-\exp (-x)),\text{ }x\in \mathbb{R},  \label{dl05a}
\end{equation}
the Fr\'{e}chet $df$ with parameter $\gamma >0$

\begin{equation}
\phi _{\gamma }(x)=\exp (-x^{-\gamma })\mathbb{I}_{\left[ 0,+\infty \right[
}(x),\text{ }x\in \mathbb{R}\   \label{dl05b}
\end{equation}
and the Weiblull $df$ with parameter $\beta >0$

\begin{equation}
\psi _{\beta}(x)=\exp (-(x)^{\beta})\mathbb{I}_{\left] -\infty ,0\right]
}(x)+(1-1_{\left] -\infty ,0\right] }(x)),\ \ x\in \mathbb{R},\ 
\label{dl05c}
\end{equation}
where $\mathbb{I}_{A}$ denotes the indicator function of the set A.

\bigskip

\noindent Actually the limiting $df$ $M$ is defined by an equivalence
class of the binary relation $\mathcal{R}$ on the set $\mathcal{D}$ of $%
df^{\prime}s$ on $\mathbb{R}$, defined as follows 
\begin{equation*}
\forall (M_{1},M_{2})\in \mathcal{D}^{2},(M_{1}\text{ }\mathcal{R}\text{ }%
M_{2})\Leftrightarrow \exists (a,b)\in \mathbb{R}_{+}\backslash \{0\}\times 
\mathbb{R},\forall (x\in \mathbb{R}) :
\end{equation*}
\begin{equation*}
M_{2}(x)=M_{1}(ax+b).
\end{equation*}
One easily checks that if $F^{n}\left( a_{n}x+b_{n}\right) \rightarrow
M_{1}(x),$ then $F^{n}\left( c_{n}x+d_{n}\right) \rightarrow
M_{1}(ax+b)=M_{2}(x)$ whenever 
\begin{equation}
a_{n}/d_{n}\rightarrow a\text{ and }(b_{n}-d_{n})/c_{n}\rightarrow b\text{
as }n\rightarrow \infty .  \label{dl05f}
\end{equation}
\noindent Theses facts allow to parameterize the class of extremal $df$'s.
For this purpose, suppose that (\ref{dl05}) holds for the three $df$'s given
in (\ref{dl05a}), (\ref{dl05b}) and (\ref{dl05c}). If we take sequences $%
(c_{n}>0)_{n\geq 1}$ and $(d_{n})_{n\geq 1}$ such that the limits in (\ref
{dl05f}) are $a=\gamma=1/\alpha$ and $b=1$ (in the case of Fr\'{e}chet
extremal domain), and $a=-\beta=-1/\alpha$ and $b=-1$ (in the case of
Weibull extremal domain), and finally, if we interpret $(1+\gamma
x)^{-1/\gamma}$ as $exp(-x)$ for $\gamma =0$ (in the case of Gumbel extremal
domain), we are entitled to write the three extremal $df$'s in the
parameterized shape 
\begin{equation}
G_{\gamma }(x)=\exp (-(1+\gamma x)^{-1/\gamma }),\text{ }1+\gamma x\geq 0,
\label{dl05d}
\end{equation}
called the Generalized Extreme Value (GEV) $df$ with parameter $\gamma \in 
\mathbb{R}$.

\bigskip

\noindent The parameter $\gamma$ is called extreme value index. Originally,
Hill(1975) \cite{hill} introduced the so-called Hill estimator $T_{n}(1)$
of the parameter $\gamma $ of the Pareto $df$ $\left\{
1-Cx^{-1/\gamma }\right\} \mathbb{I}_{(x\geq 0)}$. So did Pickands \cite
{pickands} as well. In the same way, the Generalized Hill process $T_{n}(\tau )$
provides an infinite family of estimators for the extremal index value in the sense
that for any $\tau >0,$ for F$\in D(G_{\gamma }), \text{ } 0 <\gamma \leq +\infty$,
\begin{equation*}
\tau T_{n}(\tau )\rightarrow \gamma \: \: a.s \: \: as \: \: n \rightarrow
+\infty.
\end{equation*}
This motivated Diop and Lo \cite{dioplo} to find out asymptotic normality
results for this family in order to construct statistical tests. Although
their results seem to be valuable, they restricted themselves to values 
$\tau >1/2$. Actually, this restriction is imposed by the so-called Hungarian
approximation methods based on M. Cs\H{o}rg\"{o} \textit{et al.} \cite{cchm} results that they used. It is then necessary to call for another approach to get around this obstacle.

\bigskip

\noindent In this paper, we will complete the work of \cite{dioplo} by
giving the asymptotic law of $T_{n}(\tau )$ for $0<\tau \leq 1/2$ under the hypothesis $F \in D(G_{\gamma}), \: 0 \leq \gamma < +\infty$. Our results are also complements to the related ones for $T_{n}(0)$ in connection with those in \cite{pickands}, \cite{haanresnick} and \cite {lothese}. We point out for once that $D(G_{0})=D(\Gamma)$ for $\gamma=+\infty$.\newline

\noindent We restrict ourselves ourselves here to the cases $0 \leq \gamma <+\infty$,  since the case $\gamma <0$, may be studied through the transform  $F(x_{0}(F)-1/\dot{})) \in D(G_{-\gamma})$ for estimating $\gamma$. This leads to replace $X_{n-j+1,n}$ by $x_{0}(F)-1/1/X_{j,n}$ in (\ref{dl02a}). However, a direct investigation of (\ref{dl02a}) for $\gamma <0$ is possible. This requires the theory of sums of dependent random variables while this paper uses results on sums of independant random variabes, as it will be seen shortly. We consequently  consider a special handling of this case in a dinstinct paper.\\

\bigskip

\noindent Our achievement is the full description of the asymptotic behavior of $%
T_{n}(\tau )$, under the hypothesis $F \in D(G_{\gamma}), 0 \leq \gamma < +\infty$, for $0<\tau \leq 1/2$, which is still Gaussian for $\tau =1/2$ and non Gaussian for $\tau <1/2$. In the latter case, the limiting laws are described in connection with the Riemann function. Our best results concern $\tau \in ]0,1/2[$, in which case the usual conditions, $(C2)$ is useless and and $(C3)$ becomes very week. The asymptotic results become very general (see Subsection \ref{subsec3}).
\newline

\bigskip

\noindent Before we state our results, we need to introduce some
representations used in the theorems and to describe the conditions we shall
require, at the appropriate time, in the next section. In Section 
\ref{results}, we will state our results and their proofs. In Section \ref{remarks},
statistical illustrations of the results and simulation study
results are provided while Section \ref{sectiontool} includes the key tools of this
paper, which ends by a general conclusion in Section \ref{conclusion}.

\section{Representations Tools}

\label{tools}

Throughout the paper, we use the usual representation of $Y_{1},Y_{2},...$
by $G^{-1}(1-U_{1}),G^{-1}(1-U_{2}),...$ where $U_{1},U_{2},...$ are
independent and uniform random variables on $(0,1)$ in the sense of equality
in distribution (denoted by $=_{d})$%
\begin{equation*}
\left\{ Y_{j},n\geq 1\}=_{d}\{G^{-1}(1-U_{j}),n\geq 1\right\} ,
\end{equation*}
and hence 
\begin{equation*}
\{\left\{ Y_{1,n},Y_{2,n},...Y_{n,n}\},n\geq1\ \right\}
\end{equation*}

\begin{equation}
=_{d}\left\{
\{G^{-1}(1-U_{n,n}),G^{-1}(1-U_{n-1,n}),...,G^{-1}(1-U_{1,n})\},n\geq
1\right\} .  \label{repre}
\end{equation}
In connexion with this, we also use the following Malmquist representation
(see \cite{shwell}, p. 336)

\begin{equation}
\{\log (\frac{U_{j+1,n}}{U_{j,n}})^{j},j=1,...,n\}=_{d}\{E_{1},...,E_{n}\},
\label{malm}
\end{equation}

\noindent where $E_{1},...,E_{n}$ are independent standard exponential
random variables.\newline

\bigskip

\noindent Now we recall the classical representations of $df$'s attracted to
some nondegenerated extremal $df$. For each $df$ $F$ in the extremal domain, an appropriate
representation is given for $G^{-1}(1-u)=logF^{-1}(1-u), \: u \in (0,1)$.

\begin{theorem}
\label{A} We have :

\begin{enumerate}
\item  Karamata's representation (KARARE)

\noindent (a) If $F\in D(G_{\gamma})$, $\gamma >0$, then there exist
two measurable functions $p(u)$ and $b(u)$ of $u\in(0,1)$ such that $\sup
(\left| p(u)\right| ,\left| b(u)\right| )\rightarrow 0$ as $u\rightarrow 0$
and a positive constant c so that 
\begin{equation}
G^{-1}(1-u)=\log c+\log (1+p(u))-\gamma \log u+(\int_{u}^{1}b(t)t^{-1}dt),%
\text{ }0<u<1,  \label{rep1}
\end{equation}
where $G^{-1}(u)=\inf\{x,G(x)\geq u\},$ $0< u\leq 1$ is the generalized
inverse of $G$ with $G^{-1}(0)=G^{-1}(0+)$.\newline
\newline
\noindent (b) If $F\in D(G_{\gamma})$, $\gamma <0$, $\gamma >0$, then $%
y_{0}(G)=\sup \{x,$ $G(x)<1\}<+\infty $ and there exist two measurable
functions $p(u)$ and $b(u)$ for $u\in(0,1)$ and a positive constant c as
defined in (\ref{rep1}) such that

\begin{equation}
y_{0}-G^{-1}(1-u)=c(1+p(u))u^{-\gamma }\exp (\int_{u}^{1}b(t)t^{-1}dt),\text{ 
}0<u<1.  \label{rep2}
\end{equation}

\item  Representation of de Haan (Theorem 2.4.1 in \cite{dehaan}) :

\noindent If $G\in D(\Lambda )$, then there exist two measurable functions $%
p(u)$ and $b(u)$ of $u\in(0,1)$ and a positive constant c as defined in (\ref
{rep1}) such that for 
\begin{equation}
s(u)=c(1+p(u))\exp (\int_{u}^{1}b(t)t^{-1}dt),\text{ }0<u<1,  \label{rep3b}
\end{equation}
we have for some constant $d\in \mathcal{R}$,

\begin{equation}
G^{-1}(1-u)=d-s(u)+\int_{u}^{1}s(t)t^{-1}dt,\text{ }0<u<1.  \label{rep3a}
\end{equation}
\end{enumerate}
\end{theorem}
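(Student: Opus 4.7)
The plan is to recognize Theorem~\ref{A} as a collation of classical representation results and to reduce each part to the appropriate characterization of the extremal domain together with Karamata's representation theorem for slowly varying functions (for the first two parts) or de Haan's theorem on $\Pi$-varying functions (for the third). In every case the strategy is the same: start from a known domain-of-attraction characterization expressed in terms of regular or $\Pi$-variation of the quantile, invoke the appropriate representation theorem for that class of functions, and then perform a change of variable $t \mapsto 1/t$ to rewrite the integrals as integrals over $(u,1)$ against the measure $dt/t$.

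For part~1(a), I would use the well-known fact that $F \in D(G_\gamma)$ with $\gamma > 0$ is equivalent to $F^{-1}(1-u) = u^{-\gamma}\ell(1/u)$, with $\ell$ slowly varying at $+\infty$. Karamata's representation theorem writes $\ell(x) = c(x)\exp\!\bigl(\int_{1}^{x}\varepsilon(s)\,s^{-1}\,ds\bigr)$ with $c(x) \to c > 0$ and $\varepsilon(s) \to 0$. Taking logarithms gives $G^{-1}(1-u) = \log F^{-1}(1-u) = -\gamma \log u + \log c(1/u) + \int_{1}^{1/u}\varepsilon(s)\,s^{-1}\,ds$, and the substitution $s = 1/t$ turns the integral into $\int_{u}^{1} b(t)\,t^{-1}\,dt$ with $b(t) := \varepsilon(1/t)$. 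Setting $p(u) := c(1/u)/c - 1$ yields the stated form, with $p(u), b(t) \to 0$ as $u \to 0$. Part~1(b) is handled identically: for $\gamma < 0$ the Weibull characterization gives $y_{0}(G) - G^{-1}(1-u) = u^{-\gamma}\ell(1/u)$ with $\ell$ slowly varying at $+\infty$, and direct application of Karamata's representation (without taking logs) together with the same change of variable produces~(\ref{rep2}).

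Part~2 is the substantive step. The required tool is de Haan's characterization: $G \in D(\Lambda)$ iff $G^{-1}(1-\cdot)$ is $\Pi$-varying at $0^{+}$ with some positive auxiliary function $s(u)$, meaning $\bigl(G^{-1}(1-\lambda u) - G^{-1}(1-u)\bigr)/s(u) \to -\log \lambda$ for every $\lambda > 0$. The function $s$ is itself slowly varying at $0^{+}$, so Karamata's representation applied to $s$ yields the form (\ref{rep3b}) for constants $c$ and functions $p, b$ with the claimed asymptotics. De Haan's integral representation theorem for $\Pi$-varying functions then provides the constant $d$ and the formula~(\ref{rep3a}); this is essentially Theorem~2.4.1 of \cite{dehaan}, which is explicitly cited in the statement.

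The main obstacle is part~2: parts 1(a) and 1(b) are essentially bookkeeping once the Karamata representation is available, but part~2 requires the full $\Pi$-variation machinery, in particular the existence, uniqueness-up-to-asymptotic-equivalence, and slow variation of the auxiliary function $s(u)$, together with the nontrivial integral identity relating $G^{-1}(1-u)$ to $s$. Since the authors invoke the de Haan reference directly, the cleanest route in the write-up is to quote Theorem~2.4.1 of \cite{dehaan} and then merely verify that $s$ has the Karamata form~(\ref{rep3b}).
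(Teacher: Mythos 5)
Your outline is correct, but note that the paper itself offers no proof of Theorem~\ref{A} at all: it is presented purely as a recall of classical representations, with part~2 attributed directly to Theorem~2.4.1 of \cite{dehaan} and parts~1(a)--(b) to Karamata's representation (KARARE). Your reconstruction is the standard derivation and fills in what the authors leave to the references. For 1(a), the reduction to $F^{-1}(1-u)=u^{-\gamma}\ell(1/u)$ with $\ell$ slowly varying, followed by Karamata's representation of $\ell$, the substitution $s=1/t$, and the normalization $p(u)=c(1/u)/c-1$, gives exactly (\ref{rep1}); for 1(b) (which, despite the typo ``$\gamma<0$, $\gamma>0$'' in the statement, is the Weibull case $\gamma<0$ with finite endpoint $y_{0}(G)$) the same argument applied multiplicatively to $y_{0}-G^{-1}(1-u)$ gives (\ref{rep2}). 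For part~2 your identification of the auxiliary function $s$ of the $\Pi$-varying function $G^{-1}(1-\cdot)$ as slowly varying at $0^{+}$ is the right justification for why $s$ admits the Karamata form (\ref{rep3b}), and the integral identity (\ref{rep3a}) is precisely de Haan's representation theorem, which is the one genuinely nontrivial ingredient; quoting it, as you propose, is also what the paper does implicitly. In short, there is no gap: your proof is the standard one, and it is more explicit than anything in the paper, which simply cites the literature.
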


\bigskip \noindent It is important to remark at once that any $df$ $F$ in the
domain of attraction is associated with a couple of functions $(p,b)$ linked to $G(x)=F(e^{x})$ used
in each appropriate representation. Our conditions will rely on them. In
order to state them, we set first for $0<\tau \leq 1/2$ 
\begin{equation}
a_{n}(\tau )=k^{-\tau }\sum_{j=1}^{k}j^{\tau -1}  \label{defan}
\end{equation}
and 
\begin{equation}
\sigma _{n}^{2}(\tau )=k^{-2\tau }\sum_{j=1}^{k}j^{2(\tau -1)}.
\label{defsigma}
\end{equation}
From there, our conditions are the following, where $\lambda $ is an
arbitrary real number greater than one. The first is 
\begin{equation}
g_{1,n}(p,\lambda)(\sigma _{n}(\tau )k^{\tau })^{-1}\sum_{j=1}^{k}j^{\tau
}\rightarrow 0 \:\: as \: n\rightarrow +\infty  \tag{C1}
\end{equation}
for 
\begin{equation*}
0\leq g_{1,n}(p,\lambda )=\sup_{0\leq u\leq \lambda k/n}\left| p(u)\right| .
\end{equation*}
The second is 
\begin{equation}
g_{2,n}(p,\lambda )(\sigma _{n}(\tau )k^{\tau })^{-1}\sum_{j=1}^{k}j^{\tau
-1}\rightarrow 0\:\: as \: n\rightarrow +\infty  \tag{C2}
\end{equation}
for 
\begin{equation*}
g_{2,n}(b,\lambda )=\sup_{0\leq u\leq \lambda k/n}\left| b(u)\right|.
\end{equation*}
The third is 
\begin{equation}
d_{n}(p,b,\lambda )(\sigma _{n}(\tau )k^{\tau })^{-1}\sum_{j=1}^{k}j^{\tau
-1}\rightarrow 0 \: \: as \: n\rightarrow +\infty  \tag{C3}
\end{equation}
for $d_{n}(p,b,\lambda )=\max (g_{1,n}(p,\lambda ),g_{2,n}(b,\lambda )\log
k)\rightarrow 0$ $as$ $n\rightarrow \infty .$

\bigskip

\noindent We finally point out that all the limits in the sequel are meant
as $n \rightarrow +\infty$ unless the contrary is specified. We are now able
to state our results in the following section.

\section{The results}

\label{results}

\label{laws}

\begin{theorem}
\label{theo2} \bigskip Let $F\in D(\Lambda)=D(G_{0}).$ Then, if (C3) holds, 
\begin{equation*}
(s(k/n)\sigma _{n}(\tau ))^{-1}(T_{n}(\tau )-a_{n}(\tau )s(k/n))\rightarrow 
\mathcal{N}(0,1)
\end{equation*}
for $\tau =1/2$ and 
\begin{equation*}
(s(k/n)\sigma _{n}(\tau ))^{-1}(T_{n}(\tau )-a_{n}(\tau )s(k/n))\rightarrow
\mathcal{L}(\tau ),
\end{equation*}
for $0<\tau <1/2$, and where 
\begin{equation*}
\mathcal{L}(\tau )=\zeta (2(1-\tau ))^{-1/2}\sum_{j=1}^{\infty }j^{-(1-\tau )}(E_{j}-1)
\end{equation*}
is a centered and normed random variable and 
\begin{equation*}
\zeta (s)=\sum_{j=1}^{\infty }j^{-s},\: s>1,
\end{equation*}
is the Riemann function.\\

\noindent Let $F\in D(G_{\gamma })$, $0<\gamma<+\infty$. Then if $(C1)$ and 
$(C2)$ hold, 
\begin{equation*}
(a_{n}(\tau )/\sigma _{n}(\tau ))(T_{n}(\tau )/a_{n}(\tau )-\gamma
)\rightarrow \mathcal{N}(0,\gamma ^{2})
\end{equation*}
for $\tau =1/2$ and 
\begin{equation*}
(a_{n}(\tau )/\sigma _{n}(\tau ))(T_{n}(\tau )/a_{n}(\tau )-\gamma
)\rightarrow \gamma \mathcal{L}(\tau ).
\end{equation*}
\end{theorem}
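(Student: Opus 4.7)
The plan is to reduce $T_{n}(\tau)$ to a linear statistic in i.i.d.\ exponentials via the classical representation $\{Y_{n-j+1,n}\}_{j\geq 1} =_{d} \{G^{-1}(1-U_{j,n})\}_{j\geq 1}$ combined with Karamata's representation (\ref{rep1}) (for $0<\gamma<+\infty$) or de Haan's representation (\ref{rep3a}) (for $\gamma=0$). Substituting (\ref{rep1}) into the order-statistic increment yields
\[
G^{-1}(1-U_{j,n}) - G^{-1}(1-U_{j+1,n}) = \gamma \log\frac{U_{j+1,n}}{U_{j,n}} + \log\frac{1+p(U_{j,n})}{1+p(U_{j+1,n})} + \int_{U_{j,n}}^{U_{j+1,n}} b(t)\,t^{-1}\,dt,
\]
and (\ref{rep3a}) gives the analogous de Haan expansion isolating the piece $\int_{U_{j,n}}^{U_{j+1,n}} s(t)\,t^{-1}\,dt$ plus a non-telescoping residual $s(U_{j+1,n})-s(U_{j,n})$. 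The Malmquist identity (\ref{malm}) then converts the leading piece into a weighted sum of i.i.d.\ exponentials: $\gamma\,k^{-\tau}\sum_{j=1}^{k} j^{\tau-1} E_{j}$ in the Fr\'echet case and $s(k/n)\,k^{-\tau}\sum_{j=1}^{k} j^{\tau-1} E_{j}$ in the Gumbel case, whose means exactly produce the centerings $\gamma a_{n}(\tau)$ and $s(k/n) a_{n}(\tau)$.

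The whole question then reduces to the asymptotics of the pivot
\[
W_{n} = \sigma_{n}(\tau)^{-1}\,k^{-\tau}\sum_{j=1}^{k} j^{\tau-1}(E_{j}-1),
\]
which has mean zero and unit variance. For $\tau=1/2$, the coordinate variances $j^{-1}/\sum_{i\leq k} i^{-1}$ form a triangular array whose maximal entry is $O(1/\log k)\to 0$, so Lindeberg's condition holds and $W_{n}\Rightarrow \mathcal{N}(0,1)$. For $0<\tau<1/2$, the exponent satisfies $2(\tau-1)<-1$, hence $k^{2\tau}\sigma_{n}^{2}(\tau)\to \zeta(2(1-\tau))$ and the series $\sum_{j\geq 1} j^{\tau-1}(E_{j}-1)$ converges in $L^{2}$, so $W_{n}\to \zeta(2(1-\tau))^{-1/2}\sum_{j=1}^{\infty} j^{-(1-\tau)}(E_{j}-1)=\mathcal{L}(\tau)$. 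Multiplying by $\gamma$ (resp.\ $s(k/n)$) then recovers the limits stated in each case.

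The remainders are controlled on the high-probability event $\{U_{k,n}\leq \lambda k/n\}$ (for fixed $\lambda>1$), on which $|p(U_{j,n})|\leq g_{1,n}(p,\lambda)$ and $|b(U_{j,n})|\leq g_{2,n}(b,\lambda)$ uniformly for $1\leq j\leq k$. In the Fr\'echet case, the $p$-contribution is bounded by a constant multiple of $g_{1,n}(p,\lambda)\,k^{-\tau}\sum_{j=1}^{k} j^{\tau}$, whose ratio to $\sigma_{n}(\tau)$ vanishes by (C1); the $b$-contribution, bounded via $\bigl|\int_{U_{j,n}}^{U_{j+1,n}} b(t)t^{-1}\,dt\bigr|\leq g_{2,n}\log(U_{j+1,n}/U_{j,n})$ and the Malmquist identity, is dominated in expectation by $g_{2,n}(b,\lambda)\,k^{-\tau}\sum_{j=1}^{k} j^{\tau-1}$, which (C2) sends to zero after normalization by $\sigma_{n}(\tau)$.

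The main obstacle is the Gumbel decomposition: Abel summation applied to $\sum_{j=1}^{k} j^{\tau}[s(U_{j+1,n})-s(U_{j,n})]$, together with the telescoping identity $\sum_{j=2}^{k}[j^{\tau}-(j-1)^{\tau}]=k^{\tau}-1$, is needed to extract a deterministic cancellation and combine it with the Malmquist-generated piece so that the total centering is precisely $a_{n}(\tau)s(k/n)$. The surviving stochastic residual is proportional to $s(U_{j,n})/s(k/n)-1$, whose uniform bound on $\{U_{1,n}\geq c/n\}$ is of order $g_{1,n}(p,\lambda) + g_{2,n}(b,\lambda)\log k$, since the factor $\exp(\int_{U_{j,n}}^{k/n} b(t)t^{-1}\,dt)$ can amplify the $b$-error by $\log(k/(nU_{j,n}))$, which is as large as $\log k$. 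This amplification is exactly what the definition of $d_{n}(p,b,\lambda)$ in (C3) is designed to absorb; once all remainders are shown to be $o(\sigma_{n}(\tau)s(k/n))$, Slutsky's lemma combined with the $W_{n}$ limit concludes both assertions.
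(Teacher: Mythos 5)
Your proposal is correct and follows the paper's overall route: the same reduction via (\ref{repre}), the Karamata/de Haan representations and the Malmquist identity (\ref{malm}), the same pivot $W_{n}=(k^{\tau}\sigma_{n}(\tau))^{-1}\sum_{j\leq k}j^{\tau-1}(E_{j}-1)$, and the same assignment of (C1), (C2), (C3) to the three remainder mechanisms. Two sub-steps differ, both to your advantage. First, you prove the pivot's limits by the Lindeberg CLT (for $\tau=1/2$, using that the maximal scaled variance $j^{-1}/\sum_{i\leq k}i^{-1}$ vanishes, which for rescaled copies of a fixed finite-variance law does imply Lindeberg) and by $L^{2}$ convergence of the series (for $\tau<1/2$, since $2(\tau-1)<-1$); the paper's Lemma \ref{lemmatool} instead computes characteristic functions explicitly, which is heavier but yields the formula (\ref{fc01}) and the moments of $\mathcal{L}(\tau)$ as a by-product. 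Second, and more substantively, you treat the Gumbel term $S_{n}(1)=k^{-\tau}\sum_{j}j^{\tau}(s(U_{j,n})-s(U_{j+1,n}))$ by Abel summation, exploiting $\sum_{j=1}^{k}(j^{\tau}-(j-1)^{\tau})=k^{\tau}$ so that only the residual $s(U_{j,n})/s(k/n)-1=O_{p}(d_{n,0})$ survives; this gives a bound of order $d_{n}\bigl((k^{\tau}\sigma_{n}(\tau))^{-1}\sum_{j\leq k}j^{\tau-1}+\sigma_{n}(\tau)^{-1}\bigr)$, which is exactly the quantity (C3) sends to zero. The paper instead bounds $S_{n}(1)$ term by term, producing $d_{n}(k^{\tau}\sigma_{n}(\tau))^{-1}\sum_{j\leq k}j^{\tau}$ with weight $j^{\tau}$ rather than the $j^{\tau-1}$ appearing in (C3), so your summation-by-parts step is not merely cosmetic: it is what makes (C3), as literally stated, sufficient. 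One small correction of framing: Abel summation is not needed to produce the centering $a_{n}(\tau)s(k/n)$ --- that comes entirely from the Malmquist piece $S_{n}(2,1)$ whose mean is $a_{n}(\tau)s(k/n)$ --- it is needed only to show $S_{n}(1)$ is $o_{P}(\sigma_{n}(\tau)s(k/n))$.
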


\begin{remark}
\bigskip We should remark that the methods used here apply for $0<\tau <1.$
Moreover, these results do not extend to parameters $\tau \in ]1/2,1[$ since
for such ones, $k^{\tau }\sigma _{n}(\tau )\rightarrow 0,$ while our results
are based on the hypothesis $k^{\tau }\sigma _{n}(\tau )\rightarrow C\in
]0,+\infty ].$
\end{remark}

\begin{proof}
Let us use the representations (\ref{repre}). We have, for any $n\geq 1,$%
\begin{equation*}
\left\{ \log X_{n-j+1,n}=Y_{n-j+1,n},1\leq j\leq n\right\} =_{d}\left\{
G^{-1}\{1-U_{j,n}),1\leq j\leq n\right\} .
\end{equation*}
Our proofs will rely on Lemma \ref{lemmatool} below which establishes that 
\begin{equation*}
\left\{ k^{-\tau }\sum_{j=1}^{b}j^{\tau }\log (\frac{U_{j+1,n}}{U_{j,n}}%
)-a_{n}(\tau )\right\} /\sigma _{n}(\tau )=_{d}(k^{-\tau }\sigma _{n}(\tau
))^{-1}\sum_{j=1}^{k}j^{\tau -1}(E_{j}-1)
\end{equation*}
converges in distribution to $\mathcal{N}(0,1)$ random variable for $\tau =1/2$ and to
the finite random variable 
\begin{equation*}
\mathcal{L}(\tau )=A(2(1-\tau ))^{-1/2}\sum_{j=1}^{\infty }j^{-(1-\tau )}(E_{j}-1)
\end{equation*}
for $0<\tau <1/2,$ where the independent standard exponential random
variables $E_{1},E_{2},...$ are defined in (\ref{malm}). We begin by letting 
$F\in D(\Lambda)=D(G_{1/\gamma})$ for $\gamma=+\infty$. By (\ref{rep2}), we get 
\begin{equation*}
T_{n}(\tau )=k^{-\tau }\sum_{j=1}^{k}j^{\tau
}(s(U_{j,n})-s(U_{j+1,n}))+k^{-\tau }\sum_{j=1}^{k}j^{\tau
}\int_{U_{j,n}}^{U_{j+1,n}}s(t)/t\text{ }dt
\end{equation*}

\begin{equation}
\equiv S_{n}(1)+S_{n}(2).  \label{pr4a}
\end{equation}
By using (\ref{rep3b}), we have for $U_{1,n}\leq v,u\leq U_{k,n},$%
\begin{equation*}
s(u)/s(v)=(1+p(u))/(1+p(v))\exp (-\int_{u}^{v}t^{-1}b(t)dt).
\end{equation*}
By putting 
\begin{equation}
g_{1,n,0}(p)=\sup \{\left| p(u)\right| ,0\leq u\leq U_{k+1,n}\}\text{ and }%
g_{2,n,0}(p)=\sup \{\left| b(u)\right| ,0\leq u\leq U_{k+1,n}\},
\label{preuve3}
\end{equation}
and by remarquing that $\log (U_{k+1,n}/U_{1,n})=O_{p}(\log k)$, we get 
\begin{equation*}
s(u)/s(v)=(1+O(g_{1,n,0}))\exp (-O_{p}(g_{2,n,0}\log k)),
\end{equation*}
whenever 
\begin{equation}
g_{2,n,0}\log k\rightarrow _{P}0,  \label{cond00}
\end{equation}
and next
\begin{equation}
\sup_{U_{1,n}\leq u,v\leq U_{k,n}}\left| s(u)/s(v)-1\right| =O_{p}(\max
(g_{1,n,0},g_{2,n,0}\log k))  \label{preuve2}
\end{equation}
and finally, 
\begin{equation}
\sup_{U_{1,n}\leq u,v\leq U_{k,n}}\left| \frac{s(u)-s(v)}{s(k/n)}\right|
=O_{p}(\max (g_{1,n,0},g_{2,n,0}\log k).  \label{preuve0}
\end{equation}
We have then to prove (\ref{cond00}). But since $nk^{-1}U_{k+1,n}\rightarrow
1$, for a fixed value $\lambda >1$ such that (C3) holds, we may find for any $%
\varepsilon >0$\ $,$ an integer $N_{0}$ such that for any $n\geq N_{0},$%
\begin{equation*}
\mathbb{P}(g_{2,n,0}\log k\geq g_{2,n}\log k)\equiv \mathbb{P}(A_{n})<\varepsilon .
\end{equation*}
Now fix $\delta >0.$ By using the sets decomposition \ 
\begin{equation*}
(\left| g_{2,n,0}\log k\right| >\delta )=(\left| g_{2,n,0}\log k\right|
>\delta )\cap A_{n}+(\left| g_{2,n,0}\log k\right| >\delta )\cap A_{n}^{c},
\end{equation*}
we have, for $n\geq N_{0}$ 
\begin{equation*}
\mathbb{P}(\left| g_{2,n,0}\log k\right| >\delta )\leq \mathbb{P}(\left| g_{2,n}\log k\right|
>\delta )+\varepsilon .
\end{equation*}
By letting first $n\rightarrow \infty $ and next $\varepsilon \downarrow 0$, and by taking $(C3)$ into account,
we see that (\ref{cond00}) holds and so does (\ref{preuve0}). 
By putting together (\ref{pr4a}) and (\ref{preuve0}), and by using the
preceding techniques, we get for a picked value $\lambda >1$ like in (C$3)$,
for any $\varepsilon >0,$ a value N$_{0}$ such that for any $n\geq N_{0}$
and for any $\delta >0,$ 
\begin{equation*}
\mathbb{P}(\left| \frac{S_{n}(1)}{\sigma _{n}(\tau )s(k/n)}\right| >\delta )\leq
\mathbb{P}(d_{n}(p,b,\lambda )\text{ }(k^{\tau }\sigma _{n}(\tau
))^{-1}\sum_{j=1}^{k}j^{\tau }>\delta )+\varepsilon ,
\end{equation*}
where $d_{n}(p,b,\lambda )=\max (g_{1,n},g_{2,n}\log k).$ Again, by letting
first $n\uparrow \infty $ and next $\varepsilon \downarrow 0$, and by using $(C3)$, we get that  $%
S_{n}(1)(\sigma _{n}(\tau )s(k/n))^{-1}\rightarrow _{P}0.$ We now treat $%
S_{n}(2)$ $\ $and remark that 
\begin{equation*}
\frac{S_{n}(2)}{\sigma _{n}(\tau )s(k/n)}=(k^{\tau }\sigma _{n}(\tau
))^{-1}\sum_{j=1}^{k}j^{\tau }\int_{U_{j,n}}^{U_{j+1,n}} t^{-1} \left\{
s(t)/s(k/n)\right\} \text{ }dt
\end{equation*}
\begin{equation*}
=(k^{\tau }\sigma _{n}(\tau ))^{-1}\sum_{j=1}^{k}j^{\tau
}\int_{U_{j,n}}^{U_{j+1,n}}t^{-1}\text{ }dt
\end{equation*}
\begin{equation*}
+(k^{\tau }\sigma _{n}(\tau ))^{-1}\sum_{j=1}^{k}j^{\tau
}\int_{U_{j,n}}^{U_{j+1,n}}\left\{ s(t)/s(k/n)-1\right\} /t\text{ }%
dt=S_{n}(2,1)+S_{n}(2,2).
\end{equation*}
We have, by (\ref{preuve2}) and the Malmquist representation (\ref{malm}), 
\begin{equation*}
\left| S_{n}(2,2)\right| \leq O_{p}(1)d_{n}(p,b,\lambda )(k^{\tau }\sigma
_{n}(\tau ))^{-1}\sum_{j=1}^{k}j^{\tau -1}E_{j}\leq O_{p}(1)\text{ }\times
\end{equation*}
\begin{equation*}
\left\{ d_{n,0}(p,b,\lambda )(k^{\tau }\sigma _{n}(\tau
))^{-1}\sum_{j=1}^{k}j^{\tau -1}(E_{j}-1)+d_{n,0}(p,b,\lambda )(k^{\tau
}\sigma _{n}(\tau ))^{-1}\sum_{j=1}^{k}j^{\tau -1}\right\} .
\end{equation*}
where $d_{n,0}(p,b,\lambda)=\max (g_{1,n,0},g_{2,n,0}\log k).$ The first term tends to zero since 
$d_{n,0}(p,b,\lambda )\rightarrow _{P}0$ by (\ref{cond00})$,$ and $(k^{\tau
}\sigma _{n}(\tau ))^{-1}\sum_{j=1}^{k}j^{\tau -1}(E_{j}-1)$ converges in
distribution to a finite random variable by Lemma \ref{lemmatool} in Section 
\ref{sectiontool}. We handle the second term also by the previous
techniques. We choose a value $\lambda >1$ like in $(C3).$ For any $%
\varepsilon >0,$ we get value N$_{0}$ such that for any $n\geq N_{0}$ and
for any $\delta >0,$%
\begin{equation*}
\mathbb{P}(\left| d_{n,0}(p,b,\lambda )(k^{\tau }\sigma _{n}(\tau
))^{-1}\sum_{j=1}^{k}j^{\tau -1}\right| >\delta )\leq \varepsilon + \mathbb{P}(\left|
d_{n}(p,b,\lambda )(k^{\tau }\sigma _{n}(\tau ))^{-1}\sum_{j=1}^{k}j^{\tau
-1}\right| >\delta ).
\end{equation*}
By letting first $n\uparrow \infty $ and next $\varepsilon \downarrow 0$, we
see by $(C3)$ that this term goes to $0$ in probability. Finally, by the Malmquist
representation (\ref{malm}), one arrives at 
\begin{equation*}
S_{n}(2,1)=(k^{\tau }\sigma _{n}(\tau ))^{-1}\sum_{j=1}^{k}j^{\tau -1}E_{j}+o_{P}(1).
\end{equation*}

\noindent And this leads to 
\begin{equation*}
S_{n}(2,1)-a_{n}(\alpha )/\sigma _{n}(\tau )=(k^{\tau }\sigma _{n}(\tau
))^{-1}\sum_{j=1}^{k}j^{\tau -1}(E_{j}-1)\equiv V_{n}(\tau )
\end{equation*}
and finally, by summing up what precedes, 
\begin{equation*}
(\sigma _{n}(\tau )s(k/n))^{-1}(T_{n}(\tau )-a_{n}(\tau ))=V_{n}(\tau
)+o_{P}(1).
\end{equation*}
This concludes the proof since, by Lemma \ref{lemmatool} in Section \ref
{sectiontool}, $V_{n}(\tau )$ converges in probability to a $\mathcal{N}%
(0,1)$ random variable for $\tau =1/2$ and to $\mathcal{L}(\tau )$ for $\tau <1/2$.\\

\noindent Now let $F\in D(G_{\gamma}), \text{ } 0<\gamma<+\infty$. We will have similar proofs but
we do not explicit the convergences in distribution as we did in the previous
case. We have by (\ref{rep1}) and the usual representations, 
\begin{equation*}
T_{n}(\tau )=k^{-\tau }\sum_{j=1}^{k}j^{\tau }\left\{ \log
(1+p(U_{j+1,n}))-\log (1+p(U_{j,n}))\right\}
\end{equation*}
\begin{equation*}
+\gamma k^{-\tau }\sum_{j=1}^{k}j^{\tau }\log (U_{j+1,n}/U_{j,n})+k^{-\tau
}\sum_{j=1}^{k}j^{-\tau }\int_{U_{j,n}}^{U_{j+1,n}}b(t)/t\text{ }dt
\end{equation*}

\begin{equation*}
\equiv S_{n}(1)+S_{n}(2)+S_{n}(3).
\end{equation*}
We have, for large values of $k$, 
\begin{equation*}
\left| S_{n}(1)/\sigma _{n}(\tau )\right| \leq 2g_{1,n,0}(f)(\sigma
_{n}(\tau )k^{\tau })^{-1}\sum_{j=1}^{k}j^{\tau },
\end{equation*}
where $g_{1,n,0\text{ }}$is defined in (\ref{preuve3}), which tends to zero
in probability by $(C1)$ and (\ref{preuve2}). Next 
\begin{equation*}
\left| S_{n}(3)/\sigma _{n}(\tau )\right| \leq g_{2,n,0}(b)(\sigma _{n}(\tau
)k^{\tau })^{-1}\sum_{j=1}^{k}j^{\tau }\log (U_{j+1,n}/U_{j,n})
\end{equation*}
\begin{equation*}
=g_{2,n,0}(b)(\sigma _{n}(\tau )k^{\tau })^{-1}\sum_{j=1}^{k}j^{\tau
-1}(E_{j}-1)+g_{2,n,0}(b)(\sigma _{n}(\tau )k^{\tau
})^{-1}\sum_{j=1}^{k}j^{\tau -1},
\end{equation*}
where $g_{2,n,0\text{ }}$defined in (\ref{preuve3}). Then $S_{n}(3)/\sigma
_{n}(\tau )\rightarrow 0$ by $(C2)$, Lemma \ref{lemmatool} and the
methods described above. Finally, always by Lemma \ref{lemmatool}, we get 
\begin{equation*}
\left\{ (S_{n}(3)-\gamma a_{n}(\tau )\right\} /\sigma _{n}(\tau )=\gamma
(\sigma _{n}(\tau )k^{\tau })^{-1}\sum_{j=1}^{k}j^{\tau -1}(E_{j}-1)
\end{equation*}
and next 
\begin{equation*}
\left\{ T_{n}(\tau )-\gamma a_{n}(\tau )\right\} /\sigma _{n}(\tau )=\gamma
(\sigma _{n}(\tau )k^{\tau })^{-1}\sum_{j=1}^{k}j^{\tau -1}(E_{j}-1)+o_{P}(1)
\end{equation*}
\begin{equation*}
=\gamma V_{n}(\tau )+o_{P}(1)
\end{equation*}
and this converges in distribution to $\mathcal{N}(0,\gamma ^{2})$ random
variable for $\tau =1/2$ and to $\gamma \mathcal{L}(\tau )$ for $\tau <1/2.$ The proof
is completed with the remark that $a_{n}(\tau )\rightarrow \tau ^{-1}.$
\end{proof}

\section{Remarks, Applications and Simulations}

\label{remarks}

\subsection{An asymptotically Gaussian estimator}

Now let us illustrate some statistical applications of our results. Let $%
F\in D(G_{\gamma}$). Recall that for $\tau >1/2,$ $\tau T_{n}(\tau )$
has been proved to be asymptotic normal estimators of $\gamma .$ Our present
results extend those obtained to  $\tau =1/2$ and show that this latter value
is a critical one since $\tau T_{n}(\tau )$ is still an estimator of $\gamma $
for $\tau <1/2\ $but has a non-Gaussian asymptotic behavior. Now let us
particularize the results for each case and illustrate the conditions $(C1)$
and $(C2)$. For $\tau =1/2,$\ we have under the $(C1)$ and $(C2)$, 
\begin{equation*}
2k^{1/2}(\log k)^{-1/2}(T_{n}(1/2)/a_{n}(1/2)-\gamma )\rightarrow
N(0,\gamma ^{2}), \label{an01}
\end{equation*}
Thus $T_{n}(1/2)/a_{n}(1/2)$ is a new estimator of the extreme value index
which is asymptotically normal. Since $a_{n}(1/2)\rightarrow 1/2,$ we
rediscover the general result $\tau T_{n}(\tau )\rightarrow \gamma $. By $%
(S4)$ below, $a_{n}(1/2)=(1/2)(1+O(k^{-1/2}))$ so that we also have 
\begin{equation*}
2k^{1/2}(\log k)^{-1/2}(2T_{n}(1/2)-\gamma )\rightarrow N(0,\gamma ^{2}).
\end{equation*}
From there we may form another family of estimators $%
T_{n}(1/2,a)=(2a+(1-a)/a_{n}(1/2))T_{n}(1/2),$ for $0<a<1,$ which are
asymptotically normal since 
\begin{equation}
2k^{1/2}(\log k)^{-1/2}(T_{n}(1/2,a)-\gamma )\rightarrow N(0,\gamma ^{2}).
\end{equation}

\noindent This parameter $a$, that may be randomly picked from (0,1), may be usefull in
applications. The condition $(C1)$
turns to 
\begin{equation*}
g_{1,n}(p,\lambda )(\log k)^{-1/2}\sum_{j=1}^{k}j^{1/2}\sim
g_{1,n}(p,\lambda )k^{-1/2}(\log k)^{-1/2}\rightarrow 0.
\end{equation*}
But in many real situations, we may suppose that the $df$ admits a
derivative so that we may take $p(u)=0$, which renders $(C1)$ useless.
Indeed, suppose that $F(x)$ admits a derivative $F^{\prime}(x)$ for large
values of x. A condition for $F$ to belong to $D(G _{1/\gamma }), \: 0<\gamma<+\infty$, is 
\begin{equation}
\lim_{x\rightarrow \infty }\frac{1-F(x)}{xF^{\prime }x)}=\frac{1}{\gamma }.
\label{CE01}
\end{equation}
(see \cite{dehaan1}, p. 17, Theorem 1.1.11). This implies that 
\begin{equation*}
b(u)=u(G^{-1}(1-u))^{\prime }+\gamma =uG^{\prime }(G^{-1}(1-u))^{-1}+\gamma
\rightarrow 0\text{ as }n\rightarrow \infty
\end{equation*}
and next, for some constant $c>0$ and $u_{0}\in (0,1),$%
\begin{equation*}
G^{-1}(1-u)=c-\gamma \log u+\int_{u}^{u_{0}}b(t)t^{-1}dt.
\end{equation*}
For such $df^{\prime}s$, $(C1)$ is effectively useless. The
condition $(C2)$ is more consistent. However, it is somewhat related to the
extreme second order condition (see \cite{dehaan1}, p.43) and some popular models
used in the literature. For example, in the Hall one, that is $%
1-F(x)=C_{1}x^{-1/\gamma }(1+C_{2}x^{-\beta }+o(x^{-\beta })),$ where $\beta$, $C_{1}$
and $C_{2}$ are positive constants, we have for constants $C_{i},\: i=1,2,...$%
, $F^{-1}(1-u)=C_{3}u^{-\gamma }(1+C_{4}u^{\beta \gamma }+o(u^{\beta \gamma
}))$ $\ $and then 
\begin{equation*}
G^{-1}(1-u)=\log C_{3}-\gamma \log u+O(u^{\beta \gamma }),
\end{equation*}
which may be written, with $b_{1}(t)=u^{\beta \gamma }$ and $c_{1}(u)=O(1)$
as $u\rightarrow 0,$%
\begin{equation*}
G^{-1}(1-u)= C_{4}-\gamma \log u+c_{1}(u)\int_{u}^{1}(b_{1}(t)/t) \: dt.
\end{equation*}
Suppose now that $c_{1}(u)=D$ or that $c_{1}(u)\int_{u}^{1}(b_{1}(t)/t) \text{ } dt$ is
obtained from $\int_{u}^{1}c(t)(b_{1}(t)/t) \text{ } dt$ by the Integral Mean Theorem
with a bounded function $c(\cdot)$. Then condition $(C2)$ turns to 
\begin{equation*}
k^{\beta \gamma +1/2}(\log k)^{-1/2}/n^{\beta \gamma }\rightarrow 0
\end{equation*}
and $(C3)$\ to 
\begin{equation*}
k^{\beta \gamma +1/2}(\log k)^{1/2}/n^{\beta \gamma }\rightarrow 0.
\end{equation*}
We conclude that for $\tau =1/2$, $T_{n}(1/2)/a_{n}(1/2)$ is an asymptotic
normal estimator of $\gamma $ under reasonable assumptions based on the regularity
on the $df$ $F$. Simulation studies suggest its good performances as well.

\subsection{A class of non asymptotically Gaussian estimators} \label{subsec3}

For $0<\tau <1/2,$ we also have under $(C1)$ and $(C2)$, 
\begin{equation}
\frac{k^{\tau \gamma }}{\tau \zeta (2(1-\tau ))}(T_{n}(\tau )/a_{n}(\tau )-\gamma
)\rightarrow  \mathcal{L}(\tau ).  \label{ng01}
\end{equation}
Here again, we get a family of estimators of $\gamma .$ However, the
limiting law is not Gaussian. This seems to be new. Remark here that we
cannot simplify (\ref{ng01}) to 
\begin{equation*}
\frac{k^{\tau \gamma}}{\tau \zeta (2(1-\tau ))}(\tau T_{n}(\tau )-\gamma
)\rightarrow \mathcal{L}(\tau )
\end{equation*}
like for $\tau =1/2$ because of $\sigma _{n}(\tau )$ being finite and
because of $(S3)$ below.

\noindent This results make a connexion of Extreme Value Theory and Number Theory
since they deeply depend of the Riemann function closely related to the
prime numbers since, for $s>1,$ 
\begin{equation*}
\zeta (s)=\prod_{p=2}^{\infty }(1-p^{-s})^{-1}=\frac{s}{s-1}+\frac{1}{s}%
\int_{1}^{\infty }\frac{x-[x]}{x^{s+1}}dx,
\end{equation*}
where the product extends to the prime numbers $p\geq 2$ and $[x]$\ denotes
the integer part of \ $x$. The limiting law $\mathcal{L}(\tau )$ is
characterized by its characteristic function 
\begin{equation*}
\psi _{\infty }(t)=\exp (\sum_{n=2}^{\infty }\frac{(it)^{n}}{n}\zeta
(n(1-\tau ))\zeta (2(1-\tau ))^{-n/2}),
\end{equation*}
calculated and justified in (\ref{fc01}). This random variable has all its
moments finite and is related to the Riemann function. Indeed put $\psi
_{\infty }(t)=\exp (\phi (t))$ and let $\phi ^{(r)}$ denote the $r^{th}$
derivative function of $\phi .$ We easily see that 
\begin{equation*}
\psi _{\infty }^{(1)}(t)=\phi ^{(1)}(t)\exp (\phi (t)),\text{ with }\phi
^{(1)}(t)=\sum_{n=2}^{\infty }i(it)^{n-1}\zeta (n(1-\tau ))\zeta (2(1-\tau
))^{-n/2}),
\end{equation*}
and 
\begin{equation*}
\psi _{\infty }^{(2)}(t)=(\phi ^{(1)}(t))^{2}+\phi ^{(2)}(t))\exp (\phi (t)),
\end{equation*}
with 
\begin{equation*}
\phi ^{(2)}(t)=\sum_{n=2}^{\infty }-(n-1)(it)^{n-2}\zeta (n(1-\tau ))\zeta
(2(1-\tau ))^{-n/2}).
\end{equation*}
So 
\begin{equation*}
\phi ^{(1)}(0)=0\text{ and }\phi ^{(2)}(0)=-\left[ (n-1)(it)^{n-2}\zeta
(n(1-\tau ))\zeta (2(1-\tau ))^{-n/2})\right] _{n=2}=-1
\end{equation*}
and thus 
\begin{equation*}
\mathbb{E}X=0, \\ \\ \mathbb{E}X^{2}=1.
\end{equation*}
One may repeat the same methods to get 
\begin{equation*}
\mathbb{E}X^{3}=2\zeta (3(1-\tau ))/\zeta (2(1-\tau ))^{3/2},\mathbb{E}%
X^{4}=3+6\zeta (4(1-\tau ))/\zeta (2(1-\tau ))^{2}
\end{equation*}
and higher moments. For such values of $\tau ,$ the condition $(C2)$ and $%
(C3)$ become 
\begin{equation*}
g_{2,n}(p,\lambda ) \rightarrow 0
\end{equation*}
and 
\begin{equation*}
\max (g_{1,n}(p,\lambda ),g_{2,n}(b,\lambda )) \rightarrow 0.
\end{equation*}

\noindent \textbf{We get here the interesting thing that $(C2)$ always holds for any $df$ $F \in D(G_{\gamma}), 0 < \gamma <+\infty$. And this is a quite general and rare result in EVT. For $\gamma=0$, the condition $(C3)$ is also very weak since $g_{1,n}$ and $g_{2,n}$ are only requested to tend to zero faster that $\log k$. We get then very general asymptotic laws that sound as a compensation of the lack of normality!}\\

\bigskip

\noindent Since these limits seem new in the extreme value index estimation, it may be reasonable that we give
some comments on how to use them. Although we cannot give a simple expression of the $df$ of $\mathcal{L}(\tau)$, 
$F_{\mathcal{L}(\tau)}(u)=\mathbb{P}(\mathcal{L}(\tau) \leq u)$, we may use computer-based methods to compute approximations of its values.
We may fix a great value of $k$ and then consider a large sample, of size $B$, of the random variable 
\begin{equation}
\mathcal{L}(\tau,k)=\zeta(2(1-\tau),m)^{-1/2} \sum_{j=1}^{k} j^{\tau-1} (E_{j}-1) \label{laltau}
\end{equation}
where $\zeta(s,m) =\sum_{j=1}^{m} j^{-s}$, with a sufficiently large value of m. We surely get the empirical distribution function ($edf$) of this sample as a fair uniform approximation of $F_{\mathcal{L}(\tau)}$. Fix $m=10000$, $B=10000$, we may observe in the right graph in Figure \ref{fig:FIG. 1}, that the $df$'s of $F_{L(\tau)}$ for $\tau \in \{0.1, 0.2, 0.25, 0.3, 0.25, 0.3, 0.35, 0.40, 0.48, 0.48 \}$ are very close one another. When we consider a $\tau$ uniformly drawn from $]0,1/2[$, we obtain a significantly different $df$ as illustrated in the left graph in Figure 3, where we simply add this last $df$ to those already in the first graph.\\

\noindent From there, we are able to compute the probabilities and the quantiles of such limiting laws, as illustrated in the Table \ref{tab:tab1} in the Appendix. Statistical tests may then be based on these results.

\subsection{Simulations}

We wish to show the performance of these class of statistics as estimators
of $\gamma$ and compare them to some other ones available in the
literature, such as Hill's one $T_{n}(1)$ (see \cite{hill}), the Pickands
estimator (see \cite{pickands}) 

\begin{equation*}
P_{n}(k)=(\log 2)^{-1} \log \left\{ (X_{n-k,n}-X_{n-2k,n})/(X_{n-2k,n}-X_{n-4k,n})\right\} ,
\end{equation*}

\noindent and the Lo estimator (see \cite{lo}) $L_{n}(k)$ with 

\begin{equation*}
{\small L_{n}^{2}(k)=k^{-1}\sum_{j=1}^{k}\sum_{i=j}^{k}jg(i,j)(\log
X_{n-j+1,n}-\log X_{n-j,n})(\log X_{n-i+1,n}-\log X_{n-i,n}),}
\end{equation*}
where $g(i,j)=1/2$ for $i=j$ and $g(i,j)=1$ otherwise. We may see that the
performance of these estimators largely depend on the nature of the tail $%
1-F.$ In the simple case where $1-F(x)=x^{-1/\gamma }$, all these estimators
perform pretty well for $n=300,$ $k\sim n^{0.75}$ for $\gamma=0.5$  and $T_{n}(1/2)/a_{n}(1/2)$ has particularly good results. They also present good
performance for the model $1-F(x)=x^{-1/\gamma }(1+O(x^{-\eta }))$ with $\eta =100$. The results
are less sound for small $\eta$ ($\eta=10$ for example) for sample sizes around $300$. We check
the performances in Table \ref{tab:tab2} in the Appendix, where we report the estimated values for $\gamma$ for the sample case and the MSE, and in Figure \ref{fig:fig2} all in the Appendix. The asymptotic normality  of (\ref{an01}) is illustrated in the graphs of Figure \ref{fig:fig3} as good. 

\noindent The simulation results for the asymptotic normality of (\ref{an01}) are also acceptable since the empirical p-value is high, of order 90\%.\\
\bigskip

\noindent For values $\tau \in ]0,1/2[$, we illustrate the results for $\tau=0.3$. In the graph in right in Figure \ref{fig:fig4}, we realize that $T_{n}(0.3)/a_{n}(0.3)$ is a fair estimate of $\gamma$ in the simple model described above.

\noindent As said previously Figure 3 successfully compares the law of $\frac{k^{\tau \gamma }}{\tau \zeta (2(1-\tau ))}(T_{n}(\tau )/a_{n}(\tau )-\gamma)$ in (\ref{ng01}) and that of (\ref{laltau}).

\section{Lemmas and Tools}

\label{sectiontool}

We begin by this simple lemma where we suppose that we are given a sequence
of independent and uniformly distributed random variables $U_{1},U_{2},...$
as in (\ref{repre}).

\begin{lemma}
\label{lemmatool} \label{lemma1} Let 
\begin{equation*}
V_{n}(\tau )=k^{-\tau }\sum_{j=1}^{b}j^{\tau }\log (\frac{U_{j+1,n}}{U_{j,n}}%
).
\end{equation*}
Then for $\tau =1/2,$%
\begin{equation*}
\sigma _{n}^{-1}(\tau )(V_{n}(\tau )-a_{n}(\tau ))\leadsto \mathcal{N}(0,1)
\end{equation*}
and for $0<\tau <1/2,$%
\begin{equation*}
\sigma _{n}^{-1}(\tau )(V_{n}(\tau )-a_{n}(\tau ))\leadsto \mathcal{L}(\tau ),
\end{equation*}
where $a_{n}(\tau )$ and $\sigma _{n}(\tau )$ are defined in (\ref{defan})
and \ref{defsigma}), and 
\begin{equation*}
\mathcal{L}(\tau )=A(2(1-\tau ))^{-1/2}\sum_{j=1}^{\infty }j^{-(1-\tau )}(E_{j}-1),
\end{equation*}
is a centred and reduced random variable having all moments finite.
\end{lemma}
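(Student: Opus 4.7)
The plan is to reduce $V_n(\tau)-a_n(\tau)$ to a weighted sum of i.i.d.\ centered exponentials via the Malmquist identity (\ref{malm}) and then handle the two regimes $\tau=1/2$ and $\tau<1/2$ separately, since their normalizations behave very differently. Applying (\ref{malm}) gives $j\log(U_{j+1,n}/U_{j,n})=_d E_j$ with $E_1,E_2,\ldots$ i.i.d.\ standard exponentials, so
\begin{equation*}
V_n(\tau)=_d k^{-\tau}\sum_{j=1}^{k} j^{\tau-1}E_j,\qquad V_n(\tau)-a_n(\tau)=_d k^{-\tau}\sum_{j=1}^{k}j^{\tau-1}(E_j-1),
\end{equation*}
since $\mathbb{E}E_j=1$ makes $a_n(\tau)$ the mean. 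Independence of the $E_j$'s gives $\mathrm{Var}(V_n(\tau))=\sigma_n^2(\tau)$, so $\sigma_n^{-1}(\tau)(V_n(\tau)-a_n(\tau))=_d (k^\tau\sigma_n(\tau))^{-1}\sum_{j=1}^k j^{\tau-1}(E_j-1)$ is a normalized sum of independent, centered random variables with total variance one.

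For $\tau=1/2$, I would note that $\sum_{j=1}^k j^{-1}\sim\log k$, so $k^{1/2}\sigma_n(1/2)\sim(\log k)^{1/2}\to\infty$ while each summand $W_{j,k}:=(k^\tau\sigma_n(\tau))^{-1}j^{-1/2}(E_j-1)$ satisfies $\max_{1\le j\le k}|W_{j,k}|\le(\log k)^{-1/2}\max_j|E_j-1|$, which is $o_P(1)$ since $\max_j|E_j-1|=O_P(\log k)\cdot o(1)$ suitably tamed (exponentials have all moments). The Lindeberg condition
\begin{equation*}
\sum_{j=1}^{k}\mathbb{E}\!\left[W_{j,k}^2\,\mathbb{I}_{\{|W_{j,k}|>\varepsilon\}}\right]\longrightarrow 0\qquad(\forall\varepsilon>0)
\end{equation*}
then follows from the uniform integrability of $(E_j-1)^2$ together with the uniformly small weights $j^{-1/2}/(\log k)^{1/2}$. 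Lindeberg–Feller CLT yields the $\mathcal{N}(0,1)$ limit.

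For $0<\tau<1/2$, the exponent $2(\tau-1)<-1$, so $\sum_{j=1}^{\infty}j^{2(\tau-1)}=\zeta(2(1-\tau))$ is finite and $k^\tau\sigma_n(\tau)\to\zeta(2(1-\tau))^{1/2}$. By Kolmogorov's one-series theorem the random series $S:=\sum_{j=1}^{\infty}j^{\tau-1}(E_j-1)$ converges almost surely (and in $L^2$) to a centered random variable of variance $\zeta(2(1-\tau))$. The partial sums converge a.s.\ hence in law to $S$, and combining this with the deterministic limit of $k^\tau\sigma_n(\tau)$ via Slutsky gives
\begin{equation*}
\sigma_n^{-1}(\tau)(V_n(\tau)-a_n(\tau))\longrightarrow \zeta(2(1-\tau))^{-1/2}S=\mathcal{L}(\tau),
\end{equation*}
which is centered with unit variance.

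For the moment claim in the statement, I would exploit the explicit characteristic function: since $\mathbb{E}[e^{is(E_j-1)}]=e^{-is}/(1-is)$, independence and the convergent series give
\begin{equation*}
\log\mathbb{E}[e^{itS}]=\sum_{j=1}^{\infty}\bigl(-itj^{\tau-1}-\log(1-itj^{\tau-1})\bigr)=\sum_{n=2}^{\infty}\frac{(it)^n}{n}\zeta(n(1-\tau)),
\end{equation*}
valid in a neighborhood of $0$ because $n(1-\tau)>1$ for all $n\ge 2$ and the Taylor expansion of $-\log(1-z)$ converges for $|z|<1$. Analyticity of the characteristic function at the origin delivers finite moments of every order for $\mathcal{L}(\tau)$. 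The main obstacle in executing the plan is the verification of Lindeberg's condition at $\tau=1/2$ (and the uniform control of the summands using moments of the exponentials), since the $\tau<1/2$ case reduces to the straightforward almost-sure convergence of a square-summable random series.
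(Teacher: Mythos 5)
Your proof is correct, and for the two convergence statements it takes a genuinely different route from the paper. Both arguments begin identically, with the Malmquist identity reducing $\sigma_n^{-1}(\tau)(V_n(\tau)-a_n(\tau))$ to $(k^{\tau}\sigma_n(\tau))^{-1}\sum_{j=1}^k j^{\tau-1}(E_j-1)$. From there the paper works entirely with characteristic functions: for $\tau=1/2$ it Taylor-expands $\log\psi_{E_j-1}$ with a uniformly bounded remainder and shows the product converges to the Gaussian transform, and for $\tau<1/2$ it computes the characteristic function of the partial sum, passes to the limit term by term, and must then separately verify (via the bounds $s(s-1)^{-1}\le\zeta(s)\le s(1+(s-1)^{-1})$) that the limiting expression $\psi_\infty$ is a well-defined characteristic function. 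You instead invoke the Lindeberg--Feller CLT for $\tau=1/2$ and Kolmogorov's one-series theorem plus Slutsky for $\tau<1/2$; the latter is arguably cleaner, since almost-sure convergence of the square-summable series $\sum j^{\tau-1}(E_j-1)$ immediately yields a well-defined limit law and spares you the verification that $\psi_\infty$ is a valid characteristic function. You then return to the paper's method (the explicit log-characteristic function $\sum_{n\ge 2}\frac{(it)^n}{n}\zeta(n(1-\tau))$ and its analyticity near the origin) only to establish finiteness of all moments, which is where that computation is genuinely needed. One small blemish: your intermediate claim that $\max_{1\le j\le k}|W_{j,k}|=o_P(1)$ is not justified by the bound you give, since $(\log k)^{-1/2}\max_j|E_j-1|$ is of order $(\log k)^{1/2}$; fortunately that claim is not what Lindeberg requires, and the argument you actually sketch --- uniform integrability of $(E_j-1)^2$ combined with $\max_j c_{j,k}=(\log k)^{-1/2}(1+o(1))\to 0$ and $\sum_j c_{j,k}^2=1$ --- is the correct and sufficient one.
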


\begin{proof}
By using the Malmquist representation (\ref{malm}), we have 
\begin{equation*}
V_{n}(k)=k^{-\tau }\sum j^{\tau -1}E_{j}.
\end{equation*}
It follows that $\mathbb{E}(V_{n}(k))=a_{n}(k)$ and $\mathbb{V}%
ar(V_{n}(k))=\sigma _{n}^{2}(k).$ Put

\begin{equation*}
V_{n}^{\ast }(\tau )=\sigma _{n}(k)^{-1}(V_{n}(k)-a_{n}(k)).
\end{equation*}
Then

\begin{equation*}
V_{n}^{\ast }(\tau )=(k^{\tau }\sigma _{n}(\tau ))^{-1}\sum_{j=1}^{k}j^{\tau
-1}(E_{j}-1).
\end{equation*}
For $0<\tau <1/2,$ we have by $(S1)$ below, $k^{\tau }\sigma _{n}(\tau
)\rightarrow A(2(1-\tau ))^{-1/2}$ and 
\begin{equation*}
V_{n}^{\ast }(\tau )\rightarrow \zeta (2(1-\tau ))^{-1/2}\sum_{j=1}^{\infty
}j^{\tau -1}(E_{j}-1)=L(\tau ).
\end{equation*}
Now, we have to prove that $\mathcal{L}(\tau )$ is a well-defined random variable with
all finite moments. The characteristic function of $V_{n}^{\ast
}(\tau )$ is 
\begin{equation*}
\psi _{V_{n}^{\ast }(\tau )}(t)=\exp (-\zeta (2(1-\tau
))^{-1/2}\sum_{j=1}^{k}j^{\tau -1}(it))\prod_{j=1}^{k}(1-itj^{-(1-\tau
)}\zeta (2(1-\tau ))^{-1/2})^{-1}.
\end{equation*}
By using the development of $\log (1-\cdot )$ and by the Lebesgues Theorem,
one readily proves that 
\begin{equation*}
\psi _{V_{n}^{\ast }(\tau )}(t)=\exp (\sum_{j=1}^{k}\sum_{n=2}^{\infty }%
\frac{(it)^{n}}{n}j^{-n(1-\tau )}\zeta (2(1-\tau ))^{-n/2})
\end{equation*}
\begin{equation}
\rightarrow \psi _{\infty }(t)=\exp (\sum_{n=2}^{\infty }\frac{(it)^{n}}{n}%
\zeta (n(1-\tau ))\zeta (2(1-\tau ))^{-n/2}).  \label{fc01}
\end{equation}
Recall as in \cite{valiron}, p.506, that for $s>1,$%
\begin{equation*}
\zeta (s)=\frac{s}{s-1}+\frac{1}{s}\int_{1}^{\infty }\frac{x-[x]}{x^{s+1}}dx,
\end{equation*}
where $[x]$ denotes the integer part of $x.$ This leads to 
\begin{equation*}
s(s-1)^{-1}\leq \zeta (s)\leq s(1+(s-1)^{-1}).
\end{equation*}
By using this, we see that the absolute value of the general term of the
series in (\ref{fc01}) is dominated, for large values of $n$, as follows 
\begin{equation*}
\left| \frac{(it)^{n}}{n}\zeta (n(1-\tau ))\zeta (2(1-\tau ))^{-n/2}\right|
\leq 2(1-2\tau )(\frac{\left| t\right| }{2(1-\tau )})^{n}.
\end{equation*}
This shows that $\psi _{\infty }(t)$ is well defined and characterizes the 
$df$ of $\mathcal{L}(\tau )$. In Section \ref{remarks}, we anticipated by remarking
that $\mathcal{L}(\tau )$ is centered and has all its moment finite and indicated the
way to get them.

Its remains to handle the case $\tau =1/2$. Let us evaluate the moment
generating function of $V_{n}^{\ast }(1/2):$

\begin{equation*}
\psi _{V_{n}^{\ast }(1/2)}(t)=\prod_{j=1}^{k}\psi _{(E_{j}-1)}(ti^{\tau
-1}(k^{1/2}\sigma _{n}(1/2))^{-1}),
\end{equation*}

\noindent But we have $\sigma _{n}^{2}(1/2)=S(k,1)\sim k^{-1}log k$ and next 
$k^{1/2}\sigma _{n}(1/2)=(\log k)^{1/2}\rightarrow \infty .$ It follows that
for a fixed, for k large enough, 
\begin{equation}
\left| t\text{ }i^{-1/2}\text{ }(k^{1/2}\sigma _{n}(1/2))^{-1}\right| \leq
2t(\log k)^{-1/2}  \label{dl10c}
\end{equation}
uniformly in $i\geq 1$. At this step, we use the expansion of $\psi
_{(E_{j}-1)}$ in the neighbourhood of zero : 
\begin{equation*}
\psi _{(E_{j}-1)}(u)=1+u^{2}/2+u^{4}g(u),
\end{equation*}
where there exists $u_{0}$ such that 
\begin{equation*}
0\leq u\leq u_{0}\Rightarrow \left| g(u)\right| \leq 1.
\end{equation*}
By using the uniform bound in (\ref{dl10c}), for a value of $k$ such that $%
2t(\log k)^{1/2}\leq u_{0},$ we get 
\begin{equation*}
\psi _{(E_{j}-1)}(t\text{ }j^{-1/2}\text{ }(k^{1/2}\sigma _{n}(1/2))^{-1})=1+%
\frac{1}{2}(t\text{ }j^{-1/2}\text{ }(k^{1/2}\sigma _{n}(1/2))^{-1})^{2}
\end{equation*}
\begin{equation*}
+(t\text{ }j^{-1/2}\text{ }(k^{1/2}\sigma _{n}(1/2))^{-1})^{4}g_{0,j,n}(t),
\end{equation*}
where $\left| g_{0,j,n}(t)\right| \leq 1$ for all $1\leq j\leq k.$ By the
uniform boundedness of the error term, we have 
\begin{equation*}
\log \psi _{(E_{j}-1)}(t\text{ }j^{-1/2}\text{ }(k^{1/2}\sigma
_{n}(1/2))^{-1})=
\end{equation*}
\begin{equation*}
\frac{1}{2}(t\text{ }j^{-1/2}\text{ }(k^{1/2}\sigma _{n}(1/2))^{-1})^{2}+(t%
\text{ }j^{-1/2}\text{ }(k^{1/2}\sigma _{n}(1/2))^{-1})^{4}g_{0,j,n}(t)
\end{equation*}
\begin{equation*}
(t\text{ }j^{-1/2}\text{ }(k^{1/2}\sigma _{n}(1/2))^{-1})^{4}g_{1,j,n}(t),
\end{equation*}
where always $\left| g_{1,j,n}(t)\right| \leq 1$ for all $1\leq j\leq k.$
Finally 
\begin{equation*}
\psi _{V_{n}^{\ast }(1/2)}(t)=\exp (\sum_{j=1}^{k}\log \psi
_{(E_{j}-1)}(ti^{\tau -1}(k^{1/2}\sigma _{n}(1/2))^{-1})
\end{equation*}
\begin{equation*}
=\exp (t^{2}/2+t^{4}g_{2}\sum_{j=1}^{k}j^{-2}),
\end{equation*}
where for large values of $k$, 
\begin{equation*}
\left| g_{2}\right| \leq (\log k)^{-2}.
\end{equation*}
Hence 
\begin{equation*}
\psi _{V_{n}^{\ast }(1/2)}(t)\rightarrow \exp (t^{2}/2)
\end{equation*}
and 
\begin{equation*}
V_{n}^{\ast }(1/2)\rightarrow \mathcal{N}(0,1),
\end{equation*}
which achieves the proofs.
\end{proof}

\bigskip

\begin{lemma}
Define for $\tau >0$, $S(k,\tau )=k^{-\tau }\sum_{j=1}^{k}j^{-\tau }.$ Then%
\newline
for $\tau >1$, 
\begin{equation}
S(k,\tau )\sim \zeta (\tau )k^{-\tau },  \tag{S1}
\end{equation}
for $\tau =1/2$, 
\begin{equation}
S(k,\tau )=2(1+O(k^{-1/2})),  \tag{S2}
\end{equation}
for $\tau <1$, 
\begin{equation}
S(k,\tau )=k^{1-2\tau }/(1-\tau )(1+O(k^{-1-\tau }))  \tag{S3}
\end{equation}
and for $\tau =1,$%
\begin{equation}
S(k,\tau )=(\log k)k^{-1}(1+O(1/\log k)).  \tag{S4}
\end{equation}
\end{lemma}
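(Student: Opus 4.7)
The plan is to reduce the four statements $(S1)$--$(S4)$ to standard integral/Euler--Maclaurin estimates for the partial sum
$$\Sigma_{k}(\tau):=\sum_{j=1}^{k}j^{-\tau},$$
after which $(S1)$--$(S4)$ follow by multiplying through by the prefactor $k^{-\tau}$. I would organize the proof into three cases corresponding to whether the series $\sum j^{-\tau}$ converges, diverges polynomially, or diverges logarithmically.

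First, for $\tau>1$ I would note that $\sum_{j\ge 1}j^{-\tau}=\zeta(\tau)$ converges, and bound the tail by the integral:
$$0<\zeta(\tau)-\Sigma_{k}(\tau)=\sum_{j>k}j^{-\tau}\leq \int_{k}^{\infty}x^{-\tau}dx=\frac{k^{1-\tau}}{\tau-1}.$$
Hence $\Sigma_{k}(\tau)=\zeta(\tau)+O(k^{1-\tau})$, and multiplying by $k^{-\tau}$ yields $S(k,\tau)=\zeta(\tau)k^{-\tau}(1+O(k^{1-\tau}))\sim\zeta(\tau)k^{-\tau}$, which is $(S1)$. Second, for $0<\tau<1$ (which covers both $(S2)$ and $(S3)$) the monotonicity of $x\mapsto x^{-\tau}$ on $[1,\infty)$ gives the standard sandwich
$$\int_{1}^{k+1}x^{-\tau}dx\leq \Sigma_{k}(\tau)\leq 1+\int_{1}^{k}x^{-\tau}dx,$$
so that $\Sigma_{k}(\tau)=k^{1-\tau}/(1-\tau)+O(1)$. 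Multiplying by $k^{-\tau}$ gives
$$S(k,\tau)=\frac{k^{1-2\tau}}{1-\tau}+O(k^{-\tau})=\frac{k^{1-2\tau}}{1-\tau}\bigl(1+O(k^{\tau-1})\bigr),$$
which is $(S3)$; the specialization $\tau=1/2$ gives $\Sigma_{k}(1/2)=2\sqrt{k}+O(1)$, hence $S(k,1/2)=2(1+O(k^{-1/2}))$, which is $(S2)$.

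Finally, for $\tau=1$ I would invoke the classical harmonic expansion $\Sigma_{k}(1)=H_{k}=\log k+\gamma_{0}+O(1/k)$ (with $\gamma_{0}$ the Euler--Mascheroni constant, obtainable by the same Euler--Maclaurin argument applied to $x^{-1}$). Multiplying by $k^{-1}$ yields $S(k,1)=k^{-1}\log k+O(k^{-1})=k^{-1}\log k(1+O(1/\log k))$, which is $(S4)$. There is no real obstacle here; the only point requiring mild care is matching the claimed shape of the relative error in each case (in particular, the relative error in $(S3)$ degrades exactly when $\tau\uparrow 1$, which is why the cases $\tau=1/2$, $\tau<1$, $\tau=1$ are stated separately), but this is a routine bookkeeping check on the integral comparison.
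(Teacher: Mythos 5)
Your proof is correct and takes essentially the same route as the paper, whose entire argument is the one--line remark that the formulas follow ``by comparing $\sum_{j=1}^{k}j^{-\tau}$ and $\int_{1}^{k}x^{-\tau}dx$, as in classical analysis''; you have simply written out those integral comparisons case by case. One small point worth recording: the relative error $O(k^{\tau-1})$ you derive for $(S3)$ is the correct one (it is consistent with the $O(k^{-1/2})$ of $(S2)$ at $\tau=1/2$), so the exponent $-1-\tau$ printed in the statement should be read as $-(1-\tau)$.
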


\begin{proof}
Theses formulas are readily obtained by comparing the $\sum_{j=1}^{k}j^{-%
\tau }$ and $\int_{1}^{k}x^{-\tau }dx,$ as in classical analysis.
\end{proof}

\section{Conclusion}

\label{conclusion}

The family $\{T_{n}(\tau ),0<\tau \leq 1/2\}$ is only studied for $F$ belonging to 
$D(G_{\gamma})$, $\gamma \geq 0$. This includes the negative case under the appropriate transform. However the remaining case, that is the Weibull domain, presents a radically different approach including sums of dependent random variables and deserves a seperated study. Furthermore, we still have to develop a Bayesian approach by considering a random choice of the parameter $\tau$ leading to an optimal choice among a large class of
admissible laws.

\newpage
\noindent \LARGE{APPENDIX}.
\Large We give in this appendix all the tables and figures of the paper.

\small
\begin{table}[htbp]
	\centering
		\begin{tabular}{|l|l|l|l|l|l|l|l|l|l|l|}
\hline
                     &   0.1  &  0.15    &   0.2  & 0.25    &  0.3    & 0.35   &  0.4   & 0.45  &   0.48   &  unif\\
\hline
$F(-1.96)$ &0.004   &0.0035    & 0.006  &0.007   & 0.009    &0.0137  & 0.016  &0.0166 & 0.018    & 0.067 \\
\hline
$F(0)$     &0.5572  &0.557     &0.549    &0.544   & 0.538   &0.54      &0.524   &0.501   &0.524  &0.537\\
\hline
$F(1.96)$  &0.9530  &0.9558   & 0.9545   &0.9623  &0.961    & 0.965   &0.971   &0.9842   &0.966  &0.913\\
\hline
\end{tabular}
\caption{}
	\label{tab:tab1}
\end{table}

\begin{table}[htbp]
{\small \centering
\begin{tabular}{|l|l|l|l|l|}
\hline
& $T^{*}_{n}(1/2)$ & Hill Est. & Pickands Est. & Lo Est. \\ \hline
Estimated $\gamma$ & 0.5047 & 0.5008 & 0,4992 & 0.4733 \\ \hline
MSE & 0.0051 & 0.0025 & 0.045 & 0.009 \\ \hline
\end{tabular}
}
\caption{}
\label{tab:tab2}
\end{table}

\begin{figure}[htbp]
	\centering
	\caption{}
		\includegraphics[width=10cm]{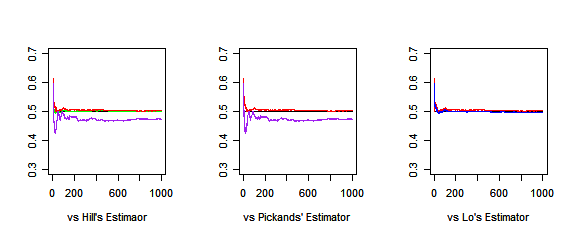}
		\label{fig:fig2}
\end{figure}

\begin{figure}[htbp]
	\centering
	\caption{}
		\includegraphics[width=10cm]{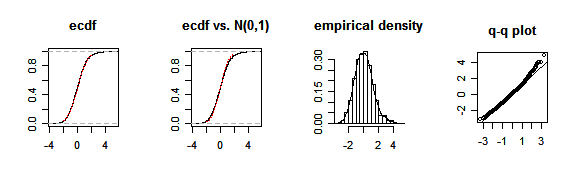}
	\label{fig:fig3}
\end{figure}

\begin{figure}[htbp]
	\centering
	\caption{}
		\includegraphics[width=10cm, height=5cm]{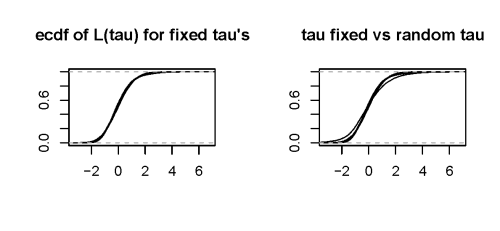}
	\label{fig:FIG. 1}
\end{figure}

\begin{figure}[htbp]
	\centering
	  \caption{}
		\includegraphics[width=10cm]{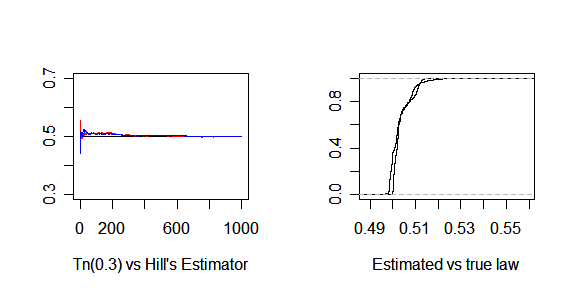}
	\label{fig:fig4}
\end{figure}

\newpage

\end{document}